\begin{document}
\title{A note on anti-coordination and social interactions\thanks{This research is  supported by the 973 Program (2010CB731405) and National Natural Science Foundation of China (71101140).}}
\author{Zhigang Cao,Xiaoguang Yang}
\institute{Key Laboratory of Management, Decision \& Information Systems,\\ Academy of Mathematics
and Systems Science,\\ Chinese Academy of Sciences, Beijing, 100190, China.\\
\email{zhigangcao@amss.ac.cn,xgyang@iss.ac.cn}}

\maketitle
\begin{abstract}
This note confirms a conjecture of [Bramoull\'{e}, Anti-coordination and social interactions, Games and Economic Behavior, 58, 2007: 30-49]. The problem, which we name the maximum independent cut problem, is a restricted version of the MAX-CUT problem, requiring one side of the cut to be an independent set. We show that the maximum independent cut problem does not admit any polynomial time algorithm with approximation ratio better than $n^{1-\epsilon}$, where $n$ is the number of nodes, and $\epsilon$ arbitrarily small, unless P=NP. For the rather special case where each node has a degree of at most four, the problem is still APX-hard.\\
{\bf Keywords.} Anti-coordination game; The frustration function; The maximum independent cut problem; APX-hard.
\end{abstract}

\section{Introduction}
Bramoull\'{e} (2007) introduces the anti-coordination game on networks. This is a very promising model that captures the social interactions where two choices are strategic substitutes.  The game is described as follows. There is an undirected graph $G=(N,E)$, where each node stands for a player. Each pair of players that are neighbor to each other on this graph play the following anti-coordination game:

\begin{center}\begin{tabular}{|c|c|c|}
\hline
&A&B\\
\hline
A&$(\pi_{AA},\pi_{AA})$&$(\pi_{AB},\pi_{BA})$\\
\hline
B&$(\pi_{BA},\pi_{AB})$&$(\pi_{BB},\pi_{BB})$\\
\hline\end{tabular},\end{center}
 where $\pi_{BA}>\pi_{AA}$ and $\pi_{AB}>\pi_{BB}$.

 Obviously, this game is symmetric, and has two pure Nash equilibria, $(A,B)$ and $(B,A)$ (notice that it also has a unique mixed Nash equilibrium). The name ``anti-coordination" comes from the fact that each player likes to choose a different action with her opponent.

 In Bramoull\'{e}'s model, each player plays the anti-coordination game once with each of her neighbors, and she has to choose the same action among all the games she is involved. Her payoff in the whole game is simply defined as the sum of her payoffs in all the partial games involved.

 The relative payoffs, i.e. the gap between $\pi_{BA}$ and $\pi_{AA}$, and that between $\pi_{AB}$ and $\pi_{BB}$, matter a lot. Define $\pi_A=\pi_{AB}-\pi_{BB}$ and $\pi_B=\pi_{BB}-\pi_{AB}$. Players choosing action $A$ are called $A$ players, and those choosing $B$ are called $B$ players. Intuitively, $\pi_A$ is the payoff of the $A$ player at equilibrium $(A,B)$ minus the payoff obtained when she deviates, and $\pi_B$ is the payoff of the $B$ player at equilibrium $(A,B)$ minus the payoff obtained when she deviates.

  The following frustration function is a powerful tool to study the anti-coordination game on networks:\begin{equation*}\varphi(s,\pi_A,\pi_B,G)=\pi_{A}n_{BB}+\pi_{B}n_{AA},\end{equation*}
where $s$ is a pure action profile, $n_{AA}$ the number of edges between $A$ players, and $n_{BB}$ that between $B$ players. In fact, the opposite of the frustration function, $-\varphi$, is a potential function of the networked anti-coordination game. Therefore, the well-known result of Monderer and Shapley (1996) tells us that: (i) the networked anti-coordination game always has a pure Nash equilibrium, (ii) a pure action profile $s$ is a Nash equilibrium  if and only if it is a local maximum of the potential function, or equivalently a local minimum of the frustration function,  and (iii) the (asynchronous) best response dynamic leads any pure action profile to a Nash equilibrium. A more recent, and more general, result of Hoefer and  Suri (2009) shows that the best response dynamic terminates in $O(nm^2)$, where $n$ is the number of nodes and $m$ the number of edges. Consequently, finding an arbitrary pure Nash equilibrium for the networked anti-coordination game is polynomially solvable.

The main concentration of  Bramoull\'{e} (2007) is on the global minimums of the frustration function. They correspond to special Nash equilibria. Bramoull\'{e} observes that when $\pi_A=\pi_B$, minimizing the frustration function is generally NP-hard, because it is equivalent to the well-known MAX-CUT problem, which is NP-hard (cf. Garey and Johnson, 1979; or Korte and Vygen, 2008).

Bramoull\'{e} (2007) conjectures that in the polar case, where $\pi_A\gg \pi_B$, minimizing the frustration function is also NP-hard. He observes that in this case, a pure action profile is a Nash equilibrium if and only if all the $B$ players are independent on graph $G$ (because once a $B$ player has one $B$ neighbor, she will definitely deviate to $A$), and each $A$ player must have at least one  $B$ neighbor. The second condition requires that the set of $B$ players is a maximal independent set, because otherwise there must exist an $A$ player who has no $B$ neighbor (we assume w.l.o.g. that the graph $G$ is connected), who can get strictly better off by deviating. We name this problem {\it the maximum independent cut problem}, which,  in the language of combinatorial optimization,  is stated formally as follows.

\line(1,0){370}
\begin{center}The maximum independent cut problem\end{center}

{\bf Input.} A graph $G=(N,E)$.

{\bf Output.} A maximal independent set $C\subseteq N$.

{\bf Objective.} Maximizing the number of edges between $C$ and $N\setminus C$.

 \line(1,0){370}\\

 The maximum independent cut problem is obviously a restricted version of the MAX-CUT problem. Not surprisingly, as conjectured by Bramoull\'{e} (2007), it is NP-hard. In fact, we shall show in the next section that it is even hard to approximate, and does not admit a PTAS even for a rather special case where each node has at most four neighbors.
\section{The results}
\begin{theorem}The maximum independent cut problem is NP-hard.\end{theorem}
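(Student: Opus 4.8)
The plan is to prove NP-hardness by polynomial-time reduction from a known NP-complete problem. Since the maximum independent cut problem asks for a maximal independent set maximizing the cut, and the abstract promises a much stronger inapproximability result ($n^{1-\epsilon}$ hardness), I expect the authors to reduce from a problem about independent sets. The most natural candidate is the MAXIMUM INDEPENDENT SET problem, which is NP-hard and famously hard to approximate to within $n^{1-\epsilon}$; this matches the claimed inapproximability factor exactly, strongly suggesting this is the source problem. So my first step is to take an arbitrary graph $G=(N,E)$, an instance of MAXIMUM INDEPENDENT SET, and build from it a graph $G'$ that is an instance of the maximum independent cut problem.

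The key construction I would try is a \emph{gadget that forces the cut value to track the size of the independent set}. In $G$, an independent set $S$ contributes all its incident edges to the cut between $S$ and $N\setminus S$, but the subtlety is twofold: (i) the maximum independent cut problem requires $C$ to be \emph{maximal}, whereas a maximum independent set need not be maximal in any cut-friendly way, and (ii) the objective counts \emph{edges} in the cut, not vertices in $S$. To make the cut value a faithful proxy for $|S|$, I would attach to each node $v\in N$ a pendant structure — most simply, a single new leaf vertex $v'$ adjacent only to $v$ (or a small bundle of leaves) — so that placing $v$ in $C$ guarantees a controllable, predictable number of cut edges to its private pendants, while the original edges of $G$ are handled by auxiliary gadgets that prevent two adjacent original nodes from both lying in $C$. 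The maximality requirement can be accommodated by ensuring the pendant leaves can always be forced into or out of $C$ to complete a maximal independent set without disturbing the counting.

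The main steps, in order, are: first, fix the reduction graph $G'$ and describe its vertex and edge gadgets precisely; second, prove the \emph{forward direction} — given an independent set $S$ in $G$ of size $k$, exhibit a maximal independent set $C$ in $G'$ whose cut value is a known increasing function of $k$; third, prove the \emph{reverse direction} — from any maximal independent set $C$ in $G'$ achieving cut value at least some threshold, extract an independent set in $G$ of size at least $k$, arguing that the gadgets penalize any $C$ that tries to include two original adjacent vertices; and fourth, check that the construction has polynomial size and that the objective values are related by an exact, efficiently computable formula, so that an optimal independent cut yields an optimal independent set.

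The hard part, I expect, will be the \emph{reverse direction combined with the maximality constraint}. Maximality is a global structural condition that an adversarial solution to the independent cut problem might exploit: a maximal independent set $C$ in $G'$ could in principle include pendant vertices instead of original vertices, or mix the two in a way that inflates the cut without corresponding to a large independent set in $G$. The gadget weights must therefore be chosen so that every maximal independent set is, without loss of generality, \emph{normalizable} to a canonical form from which an independent set of $G$ can be read off, and so that deviating from that canonical form can only decrease (or not increase) the cut value. Getting these inequalities to line up — so that the reduction is both correct and preserves the approximation ratio tightly enough to later yield the $n^{1-\epsilon}$ bound — will be the crux of the argument.
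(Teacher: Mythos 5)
There is a genuine gap. You correctly guess the source problem (\textsc{Maximum Independent Set}), and your high-level goal --- make the cut value a faithful proxy for $|S|$ --- is exactly the right one. But the one concrete gadget you float, a pendant leaf (or bundle of leaves) attached to each original vertex, provably fails to do this. If $v'$ is a leaf whose only neighbor is $v$, then in \emph{any} maximal independent set exactly one of $v,v'$ is selected (if $v\notin C$ then $v'$ must be added by maximality), so the edge $vv'$ is cut no matter what; a bundle of $k$ leaves likewise contributes exactly $k$ cut edges whether or not $v\in C$. So the pendants encode nothing about $|C\cap N|$. You also spend effort on a non-issue: ``gadgets that prevent two adjacent original nodes from both lying in $C$'' are unnecessary, since $C$ is required to be independent by the problem definition. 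Meanwhile the actual difficulty goes unaddressed: for an independent $C\subseteq N$ the cut value contributed by the original edges is $\sum_{v\in C}\deg_G(v)$, which depends on the degrees of the chosen vertices, not on $|C|$, so without degree equalization a small independent set of high-degree vertices can beat a large one.

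The paper's construction resolves this by adding $n^2$ new vertices forming a clique, each joined to every original vertex. This makes every vertex of $N$ have degree between $n^2$ and $n^2+n-1$ in $G'$, so for independent sets $C_1,C_2\subseteq N$ the cut value is larger exactly when the cardinality is larger --- the degree variation of at most $n-1$ per vertex is swamped by the uniform $n^2$. It also disposes of your (legitimate) worry about adversarial maximal independent sets: any solution containing a clique vertex must be that singleton, with value $n^2+n-1$, which is beaten by any independent set of $G$ of size at least $2$ (value at least $2n^2$); and any nonempty $C\subseteq N$ is automatically blocked from absorbing clique vertices. Your plan as stated would stall at exactly the step you flag as the crux, because the gadget you propose cannot produce the needed inequalities; the missing idea is degree equalization via a universally-joined clique (or some equivalent device).
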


\begin{proof}We prove by reduction from the maximum independent set problem. Given an instance of the maximum independent set problem $G=(N,E)$, suppose $N=\{1,2,\cdots,n\}$. We assume w.l.o.g. that $G$ is not a complete graph. Construct an instance of the maximum independent cut problem $G'=(N',E')$ as follows: (a) $N'=N\cup \{n+1,n+2,\cdots, n^2+n\}$; (b) For any $1\leq i<j\leq n$, $i$ and $j$ are connected in $G'$ if and only if they are connected in $G$; (c) For all $n+1\leq i<j\leq n^2+n$, they are connected in $G'$; (d) For all $1\leq i\leq n$ and $n+1\leq j\leq n^2+n$, they are connected in $G'$. Let $C^*$ be a maximum independent cut of $G'$, we prove that it is also a maximum independent set of $G$. We claim that $C^*\subseteq N$. In fact, if there exists $n+1\leq i\leq n^2+n$ such that $i\in C^*$, then it must be true that $C^*=\{i\}$, because each node in $\{n+1,n+2,\cdots, n^2+n\}$ is connected to all other nodes in $G'$. And hence the objective value of this solution is $n^2+n-1$. However, the maximum independent set of $G$ has at least two members, because it is not a complete graph as we assume. If this set is chosen as a solution to $G'$, the corresponding objective value is at least $2n^2$, which is bigger than $n^2+n-1$. Suppose $C_1\subseteq N$ and $C_2\subseteq N$ are two independent sets of $G$. Because the number of neighbors that each node in $N$ has is at least $n^2$, and at most $n^2+n-1$, we know that the objective value of $C_1$ is greater than that of $C_2$ if and only if $C_1$ has a bigger cardinality than $C_2$. Therefore, to be an optimal solution, $C^*$ must be a maximum independent set of $G$. Hence the theorem.\qed\end{proof}

Recent results (H{\aa }stad, 1999; Khot, 2001; Zuckerman, 2006) show that the maximum independent set problem does not admit a polynomial time algorithm with approximation ratio less than $O(n^{1-\epsilon})$, where $\epsilon$ is arbitrarily small, unless $P=NP$. Their results are stated for the maximum clique problem, which is equivalent to the maximum independent set problem. Using the same reduction as in the proof to Theorem 1, it can be trivially shown that this result also holds for the maximum independent cut problem.

\begin{corollary}The maximum independent cut problem does not admit a polynomial time algorithm with approximation ratio less than $O(n^{1-\epsilon})$, where $\epsilon$ is arbitrarily small, unless $P=NP$.\end{corollary}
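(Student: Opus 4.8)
The plan is to compose the reduction underlying Theorem~1 with the known $n^{1-\epsilon}$ inapproximability of maximum independent set, but with one indispensable modification: the auxiliary clique must have size \emph{linear} in $n$ rather than the quadratic $n^{2}$ used in Theorem~1. Only then does the reduction increase the number of vertices by a mere constant factor, so that an approximation ratio stated in terms of the instance size transfers essentially unchanged.

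Given a connected, non-complete instance $G=(N,E)$ of maximum independent set with $|N|=n$, I would form $G'=(N',E')$ by adjoining a clique $Q$ of $t:=n$ new vertices, joining every vertex of $Q$ to every vertex of $N$ and keeping $E$ inside $N$; thus $|N'|=2n$. Two structural observations drive the argument. First, every maximal independent set of $G'$ is either contained in $N$ (hence is an independent set of $G$) or is a singleton $\{q\}$ with $q\in Q$, because each $Q$-vertex is adjacent to all others. Second, for an independent set $C\subseteq N$ the cut value is $\sum_{i\in C}\deg_{G'}(i)=n|C|+\sum_{i\in C}\deg_G(i)$, which lies in $[\,n|C|,\,(2n-1)|C|\,]$, i.e. within a factor $2$ of $n|C|$, whereas a gadget singleton has cut value only $2n-1$. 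Letting $\alpha\ge 2$ denote the independence number of $G$, picking a maximum independent set gives $\mathrm{OPT}\ge n\alpha$.

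Suppose now that some polynomial-time algorithm approximates maximum independent cut within ratio $\rho$, measured against $|N'|=2n$. On $G'$ it returns a maximal independent set $C$ with cut value at least $\mathrm{OPT}/\rho\ge n\alpha/\rho$. If $C\subseteq N$, the upper bound on cut values gives $|C|\ge \mathrm{cut}(C)/(2n-1)\ge \alpha/(2\rho)$, so $C$ is already a $2\rho$-approximate independent set of $G$. If instead $C=\{q\}$, then $2n-1\ge n\alpha/\rho$ forces $\alpha<2\rho$, and any single vertex of $G$ is a $2\rho$-approximation. Hence in either case we obtain, in polynomial time, a $2\rho$-approximation for maximum independent set on $n$ vertices. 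With $\rho=O(|N'|^{1-\epsilon})=O(n^{1-\epsilon})$ this gives an $O(n^{1-\epsilon})$-approximation, and hence (for large $n$) an $n^{1-\epsilon/2}$-approximation, for maximum independent set, contradicting the H{\aa}stad--Zuckerman bound unless $P=NP$; as $\epsilon>0$ was arbitrary the corollary follows.

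The crux---and the reason the quadratic gadget of Theorem~1 cannot simply be reused---is keeping the vertex blow-up linear. With $|N'|=\Theta(n^{2})$, a ratio $|N'|^{1-\epsilon}$ would only yield an $n^{2(1-\epsilon)}$-approximation for independent set, which undercuts the $n^{1-\epsilon}$ threshold only when $\epsilon>1/2$ and would therefore establish inapproximability merely up to $\sqrt{|N'|}$. The linear gadget removes this loss, at the cost that the reduction is no longer exact: a smaller independent set of high-degree vertices could in principle have a larger cut than a maximum independent set, so $\mathrm{OPT}$ need not be attained on a maximum independent set. The point to verify carefully is that this does no harm---the transfer above uses only the single inequality $\mathrm{OPT}\ge n\alpha$ together with the factor-$2$ control on cut values, both of which survive the shrinking of the gadget.
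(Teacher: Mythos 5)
Your argument is correct, but it is a genuinely different---and more careful---route than the paper's. The paper's entire justification of the corollary is the one-sentence remark that ``the same reduction as in the proof to Theorem 1'' transfers the $n^{1-\epsilon}$-hardness of maximum independent set; that reduction keeps the auxiliary clique at size $n^{2}$ precisely so that the correspondence is exact (the cut value of an independent set $C\subseteq N$ is sandwiched between $n^{2}|C|$ and $(n^{2}+n-1)|C|$, so cut values and cardinalities are ordered identically, which is what the NP-hardness proof of Theorem 1 needs). You correctly identify that this exactness costs a quadratic blow-up in the node count, so an $|N'|^{1-\epsilon}$-approximation of the constructed cut instance translates only into an $n^{2(1-\epsilon)}$-approximation of the original independent set instance, which contradicts the H{\aa}stad--Zuckerman bound only when $\epsilon>1/2$; taken literally, the paper's reduction establishes inapproximability only up to roughly $\sqrt{|N'|}$. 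Your repair---shrinking the clique to size $n$ so that $|N'|=2n$, conceding that the reduction is no longer exact, and recovering the guarantee from the two-sided bound $n|C|\le \mathrm{cut}(C)\le (2n-1)|C|$ together with $\mathrm{OPT}\ge n\alpha$ and the separate treatment of clique singletons---is sound and loses only a constant factor $2$, which is absorbed into the exponent for large $n$. In short, the paper's construction buys an exact correspondence between optima (what Theorem 1 requires), while yours buys a size-efficient, constant-factor approximation-preserving reduction (what the corollary as stated actually requires); your version is the one that proves the claimed bound in terms of the number of nodes of the maximum independent cut instance.
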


It can be observed also that the graph we construct in the above proof is rather dense. When the graph is sparse, the maximum independent cut problem  is still hard. In fact, we can prove that a rather sparse version is APX-hard, which tells us that this special case does not admit any PTAS, unless P=NP. 

To ease the presentation, we define the {\it 4-sparse maximum independent cut problem} as the restricted version of the maximum independent cut problem where each node has a degree of at most four.

 \begin{theorem}The 4-sparse maximum independent cut problem is MAXSNP-hard, and thus APX-hard.\end{theorem}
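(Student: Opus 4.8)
The plan is to exploit a simple structural feature of the objective and then reduce from a bounded-degree version of the maximum independent set problem, which is already known to be MAXSNP-hard. First I would record the key observation: since the feasible set $C$ is required to be \emph{independent}, there are no edges inside $C$, so every edge incident to a vertex of $C$ necessarily crosses to $N\setminus C$. Consequently the objective value of a feasible solution $C$ equals $\sum_{v\in C}\deg(v)$, the total degree of $C$. On a $d$-regular graph this is exactly $d\,|C|$, so on such graphs the maximum independent cut problem coincides, up to the constant factor $d$, with finding a maximum-cardinality independent set, i.e. with the maximum independent set problem. This is the engine of the whole argument.

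The reduction I would use is therefore essentially the identity map. I would start from the maximum independent set problem restricted to cubic (3-regular) graphs, which is MAXSNP-hard by Papadimitriou and Yannakakis (1991) for bounded degree, sharpened to cubic graphs by Alimonti and Kann (2000). Given such a cubic instance $G$, I take $f(G)=G$ as an instance of the 4-sparse maximum independent cut problem; this is legitimate because the maximum degree is $3\le 4$, and hardness proved on degree-$3$ instances transfers to the degree-$\le 4$ class, since the latter contains the former. A feasible solution of the cut instance is a maximal independent set $C$, whose objective value is $3|C|$ by the observation above; I would map it back by $g(C)=C$, a feasible independent set of value $|C|$ in the original instance.

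To verify that this is an L-reduction I would check the two inequalities. Because a maximum independent set is automatically maximal and has the largest cardinality among all (maximal) independent sets, the optimum of the cut instance is $3\alpha(G)$, where $\alpha(G)$ is the independence number; hence $OPT_{\mathrm{cut}}(f(G)) = 3\,OPT_{\mathrm{IS}}(G)$, giving the first constant $\alpha=3$. For the second, for any maximal independent set $C$ one has $|OPT_{\mathrm{IS}}(G)-|C|| = \tfrac13\,|OPT_{\mathrm{cut}}(f(G)) - 3|C||$, so the gap is preserved exactly and $\beta=\tfrac13$ works; in fact the reduction is approximation-ratio preserving, which is stronger than what APX-hardness requires.

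The step I expect to require the most care is the handling of the \emph{maximality} constraint, the feature that separates this problem from plain maximum independent set. I must argue that restricting the feasible set to maximal independent sets does not change the optimum (it does not, since a maximum independent set is already maximal), and, in the backward direction, that an approximate feasible cut solution yields an equally good independent set: forgetting maximality when passing from $C$ to $g(C)$ is costless for the independent set problem, so no approximation quality is lost. Once this is settled, the chain ``MAX-IS on cubic graphs is MAXSNP-hard $\Rightarrow$ 4-sparse maximum independent cut is MAXSNP-hard $\Rightarrow$ it is APX-hard and admits no PTAS unless P$=$NP'' follows immediately.
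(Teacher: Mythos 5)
Your proof is correct, and it takes a genuinely different route from the paper's. Your engine is the observation that a feasible $C$ is independent, so its cut value is exactly $\sum_{v\in C}\deg(v)$; on a $d$-regular graph this is $d\,|C|$, so the identity map is an exact, gap-preserving L-reduction from maximum independent set on cubic graphs (with $\alpha=3$, $\beta=1/3$), whose MAXSNP-/APX-hardness you import from Papadimitriou--Yannakakis and Alimonti--Kann. The paper instead builds an explicit gadget reduction from the 3-OCC-MAX-2SAT problem of Berman and Karpinski: three accessory nodes and five edges per clause plus one edge per pair $X_i,\overline{X_i}$, a four-case analysis of how each gadget meets a maximal independent set, and the identity $OPT(f(I))=n+3m+OPT(I)$ giving $\beta=1$. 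Your argument is shorter and formally stronger in one respect -- it establishes hardness already on $3$-regular graphs, a proper subclass of the degree-at-most-four instances -- and, combined with the known explicit inapproximability bounds for independent set on cubic graphs, it would yield a much better explicit constant than the $1+1/42231$ of Corollary 2. What the paper's construction buys is that its explicit constants are derived in a self-contained way from a single cited starting point; your reduction instead shifts the burden onto the cited hardness result, so the one point you must be careful about is that the result you invoke really is stated for \emph{regular} (not merely degree-bounded) instances, since regularity is exactly what makes the cut value proportional to $|C|$ -- and indeed it is.
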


For the reader who is not familiar with the concepts of  APX-hard or MAXSNP-hard, or their relation, we refer her to Korte and Vygen (2008), Papadimitriou and Yannakakis (1991), and Khanna et. al. (1998).

We shall prove the above theorem by reduction from a restricted version of the MAX-SAT problem, the 3-OCC-MAX-2SAT problem. In this problem, each clause has exactly two literals, and  each literal occurs in at most three clauses. The 3-OCC-MAX-2SAT is MAXSNP-hard (Berman and Karpinski, 1995). This problem is stated formally as follows. Note that for any Boolean variable $x$, we use $\overline{x}$ to denote its negation.

 \line(1,0){370}
\begin{center}The 3-OCC-MAX-2SAT problem\end{center}

{\bf Input.} $n$ Boolean variables: $x_1,x_2,\cdots,x_n$, and $m$ clauses, each having two literals: $x_{11}\vee x_{12}, x_{21}\vee x_{22},\cdots, x_{m1}\vee x_{m2}$, where $\forall 1\leq j\leq m$, $x_{j1},x_{j2}\in \{x_1,\overline{x_1}, x_2,\overline{x_2},\cdots,x_n,\overline{x_n}\}$. For each literal $x$, it occurs in at most three clauses.

{\bf Output.} A truth assignment of the $n$ variables.

{\bf Objective.} Maximizing the number of true clauses.

 \line(1,0){370}\\

 We assume w.l.o.g. in the above problem that for each clause $x_{j1}\vee x_{j2}$, $x_{j1}$ and $x_{j2}$ are not negation to each other. Because otherwise, we can simply delete these kinds of clauses, without changing the problem at all. We also assume that for each variable $x_i$, both $x_i$ and its negation $\overline{x_i}$ occurs in at least one clauses, because if $\overline{x_i}$ does not occur at all, we can safely let $x_i=1$, and similarly if $x_i$ does not occur, we can let $x_i=0$.

\begin{proof} Given an instance $I$ of the 3-OCC-MAX-2SAT problem, construct an instance $G$ of the 4-sparse maximum independent cut problem as follows. (a) For each variable $x_i$ and its negation $\overline{x_i}$, there are two nodes $X_i$ and $\overline{X_i}$ corresponding to them, respectively. We call these nodes {\it chief nodes}. (b) For each pair $X_i$ and $\overline{X_i}$, there is an edge between them. (c) For each clause $x_{j1}\vee x_{j2}$, let $X_{j1}, X_{j2}\in \{X_1,\overline{X_1},X_2,\overline{X_2},\cdots, X_n,\overline{X_n}\}$ be the corresponding chief nodes for $x_{j1}$ and $x_{j2}$, respectively. Three {\it accessory} nodes, $Y_{j1}$, $Y_{j2}$ and $Y_{j3}$, are associated, in a way as illustrated in Fig. 1. Therefore, noting that $m\leq 3n$, there are $2n+3m\leq 2n+9n=11n$ nodes, and $n+5m\leq n+15n=16n$ edges in total.

\begin{figure}[t]
\centering
 \includegraphics[width=6cm]{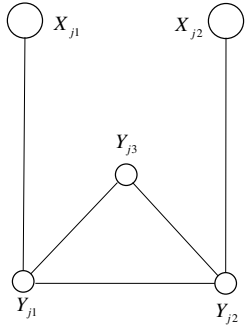}\\
  \caption{Gadget($X_{j1},X_{j2}$): chief nodes are denoted by large circles.}
\end{figure}

 First of all, notice that $G$ is indeed an instance of the 4-sparse maximum independent cut problem, because each node has a degree of at most four. We denote the above construction as $f$, i.e. $G=f(I)$.

Suppose $C$ is a solution to $G$. We construct $T$, a truth assignment of $I$ as follows: for each variable $x_j$, remember that $X_j$ is the corresponding chief node. If $X_j\in C$, then let $x_j=1$. Otherwise, let $x_j=0$. We call the above mapping $g$, i.e. $T=g(C)$.

We prove that $(f,g)$ is an L-reduction from the 3-OCC-MAX-2SAT problem to the 4-sparse maximum independent cut problem. We need to check three things:

 (a) $f$ and $g$ are both polynomially computable.

 (b) there exists a constant $\alpha$ such that $OPT(f(I))\leq \alpha OPT(I)$, where $OPT(I)$ is the optimal objective value of $I$, and $OPT(f(I))$ is the optimal objective value of $f(I)$.

 (c) there exists a constant $\beta$ such that $OPT(I)-u(I, g(C))\leq \beta (OPT(f(I))-v(f(I),C))$, where $u(I,g(C))$ is the objective value of solution $g(C)$ in instance $I$,  and $v(f(I),C)$ is the objective value of solution $C$ in instance $f(I)$.

Condition (a) can be checked trivially from our constructions. Condition (b) is also easy. In fact, we can safely choose $\alpha=32$, because (i) $OPT(I)\geq n/2$: for each variable $x_i$, if it occurs more than its negation $\overline{x_i}$, we let $x_i=1$, and otherwise we let $x_i=0$. Recall the assumption that both $x_i$ and $\overline{x_i}$ occur in at least one clauses. This truth assignment guarantees that there are at least $n$ true literals, and hence at least $n/2$ true clauses. The lower bound is obtained when the $n$ true literals are paired arbitrarily (suppose w.l.o.g. that $n$ is even). (ii) $OPT(f(I))\leq 16n$: this is an upper bound of the number of all edges in $f(I)$. So we are left to prove the last condition.

We analyze the structure of $C$, a solution to $G=f(I)$.  Recall by definition that $C$ is above all a maximal independent set. For each Gadget$(X_{j1},X_{j2})$, we discuss in four cases.

(i) If $X_{j1}\in C$ and $X_{j2}\in C$,  then it must be true that $Y_{j3}\in C$. Because $Y_{j1}$ is connected to $X_{j1}$, $Y_{j2}$ is connected to $X_{j2}$, neither of them can be selected by $C$. In order to be a maximal independent set, $C$ must include $Y_{j3}$. And in this case Gadget$(X_{j1},X_{j2})$ contributes 4 to the objective value.

 (ii) If $X_{j1}\in C$ and $X_{j2}\notin C$,  then it must be true that either $Y_{j2}\in C$ or $Y_{j3}\in C$, and Gadget$(X_{j1},X_{j2})$ contributes either 4 or 3 to the objective value.

 (iii) If $X_{j1}\notin C$ and $X_{j2}\in C$,  then it must be true that either $Y_{j1}\in C$ or $Y_{j3}\in C$, and Gadget$(X_{j1},X_{j2})$ contributes either 4 or 3 to the objective value.

 (iv) If $X_{j1}\notin C$ and $X_{j2}\notin C$,  then it must be true that either $Y_{j1}\in C$ or $Y_{j2}\in C$ or $Y_{j3}\in C$, and this gadget contributes either 3 (in the first two cases) or 2 (in the third case) to the objective value.

 According to the above discussion, we can observe that edges within gadgets contribute at most $4u(I,g(C))+3(m-u(I,g(C)))=3m+u(I,g(C))$. Since there are $n$ edges between chief nodes, we know that \begin{equation}v(f(I),C)\leq n+3m+u(I,g(C)).\label{1}\end{equation}

 Suppose now $C^*$ is an optimal solution to $f(I)$. Then, in each of  the first three cases discussed above, where at least one of the chief nodes is selected, Gadget$(X_{j1},X_{j2})$ contributes 4, and in the last case, where no chief node is selected,  Gadget$(X_{j1},X_{j2})$ contributes at most 3. Therefore, considering only edges within gadgets, $C^*$ is supposed to maximize the number of gadgets such that at least one of its chief nodes is selected.
We consider now the edges between chief nodes. First of all, for each $1\leq i\leq n$, it cannot happen that $X_i\notin C^*$ and  $\overline{X_i}\notin C^*$, because otherwise we can add $X_i$ (or $\overline{X_i}$) into $C^*$ and modify the selections in the affected gadgets to keep the new solution a maximal independent set. This will make things strictly better. Second of all, since $X_i$ and $\overline{X_i}$ are connected, they cannot be selected simultaneously. Therefore, exactly one node in $\{X_i,\overline{X_i}\}$ is selected by $C^*$, and hence it must be true that all the $n$ edges between chief nodes contribute to the objective value in $C^*$.
Based on the above discussions, we know that the number of gadgets that at least one of its chief nodes is selected by $C^*$ is maximized, and hence \begin{equation}OPT(f(I))=n+3m+OPT(I).\label{2}\end{equation}

Combining (\ref{1}) and (\ref{2}), we conclude that \begin{eqnarray*}&&OPT(f(I))-v(f(I),C)\\
&\geq& (n+3m+OPT(I))-(n+3m+u(I,g(C)))\\
&=&OPT(I)-u(I,g(C)),\end{eqnarray*}
and hence we can take $\beta=1$.\qed\end{proof}

By equation (\ref{2}), we know that $OPT(f(I))\leq n+9n+OPT(I)\leq 20OPT(I)+OPT(I)=21OPT(I)$. Therefore, we can take $\alpha=21$ to get a tighter lower bound for the inapproximability of the maximum independent cut problem. We note first that there is no polynomial algorithm for the 3-OCC-MAX-2SAT problem with approximation ratio less than $2012/2011\doteq 1+5\cdot 10^{-4}.$ This result is also proved by  Berman and Karpinski (1995).
\begin{corollary}The maximum independent cut problem with maximum degree of four does not admit a polynomial algorithm with approximation ratio less than $1+1/42231\doteq 1+2\cdot 10^{-5}$, unless P=NP.\end{corollary}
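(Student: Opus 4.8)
The plan is to read the bound off mechanically from the L-reduction $(f,g)$ already constructed in the proof of Theorem 2, by inserting the sharp value of $\alpha$ derived just above together with the quantitative inapproximability of the source problem, and then applying the standard fact that an L-reduction transfers inapproximability gaps. All the combinatorial content sits in Theorem 2; what remains is to pin down the two L-reduction constants and then divide.

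First I would record those constants. The proof of Theorem 2 already established $\beta=1$. For $\alpha$, instead of the crude value $32$ used there, I would start from the identity (\ref{2}), $OPT(f(I))=n+3m+OPT(I)$, and bound its right-hand side using $n\le 2\,OPT(I)$ (which follows from $OPT(I)\ge n/2$) together with $3m\le 9n\le 18\,OPT(I)$ (which follows from $m\le 3n$). These combine to $OPT(f(I))\le 21\,OPT(I)$, so that $\alpha=21$ and hence $\alpha\beta=21$.

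Next I would bring in the quantitative hardness of the source: by Berman and Karpinski (1995), unless $P=NP$ no polynomial algorithm approximates the 3-OCC-MAX-2SAT problem within ratio smaller than $2012/2011=1+1/2011$, so the source gap is $\delta=1/2011$. The final step is the standard inapproximability-transfer property of L-reductions: a reduction with constants $\alpha,\beta$ converts a source lower bound of $1+\delta$ into a target lower bound of $1+\delta/(\alpha\beta)$. Operationally, a hypothetical polynomial algorithm for the $4$-sparse maximum independent cut problem would be run on $f(I)$ and its output mapped back through $g$; conditions (b) and (c) with $\alpha\beta=21$ then bound the error incurred on $I$, and insisting that this never beats the Berman--Karpinski threshold is exactly what divides the gap by $\alpha\beta$. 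Substituting $\delta=1/2011$ and $\alpha\beta=21$ yields $\delta/(\alpha\beta)=1/(2011\cdot 21)=1/42231$, the asserted bound.

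I expect no genuine obstacle beyond careful arithmetic. The two places to slip are the tight evaluation $\alpha=21$ (as opposed to the loose $32$) and the single conversion between an approximation ratio and the additive gap it leaves on the source instance, since every structural ingredient has already been supplied by the L-reduction of Theorem 2.
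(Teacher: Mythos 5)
Your proposal is correct and follows essentially the same route as the paper: it takes $\beta=1$ from Theorem 2, sharpens $\alpha$ to $21$ via equation (\ref{2}) together with $OPT(I)\geq n/2$ and $m\leq 3n$, and then transfers the Berman--Karpinski gap of $1/2011$ through the L-reduction to obtain $1+1/(21\cdot 2011)=1+1/42231$. The paper's proof is exactly this chain of inequalities written out explicitly for a hypothetical algorithm $\mathcal{A}$, ending in the same contradiction.
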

\begin{proof}Suppose not, and let $\mathcal{A}$ be an algorithm with approximation ratio less than $1+1/42231$. Let the algorithm of the 3-OCC-MAX-2SAT problem derived by the L-reduction $(f,g)$ in the proof to Theorem 2 be $\mathcal{A}(f,g)$. For any instance $I$, we have \begin{eqnarray*}&&OPT(I)-\mathcal{A}(f,g)(I)\\
&\leq&OPT(f(I))-\mathcal{A}(f(I))\\
&<&(1/42231) OPT(f(I))\\
&\leq&(1/42231)\cdot 21OPT(I)\\
&=&(1/2011) OPT(I).\end{eqnarray*} \end{proof}

This implies that $OPT(I)<(2012/2011)\mathcal{A}(f,g)(I)$, a contradiction with the result of  Berman and Karpinski (1995) .\qed
\section{Concluding remarks}
In this note, we prove a conjecture of Bramoull\'{e} (2007) that a new combinatorial optimization problem, the maximum independent cut problem, is NP-hard. This confirms the insight of Bramoull\'{e} that minimizing frustration is computationally hard even in the polar case. It can be observed that none of the constructed graphs, neither in the proof to Theorem 1 nor in that to Theorem 2, are claw-free. Since it is known that the maximum independent set problem, in fact also its weighted version, is polynomially solvable for claw-free graphs (Sbihi, 1980, Minty, 1980; see also the survey of Faudree, Flandrin and Ryj$\acute{a}\breve{c}$ek, 1999), it's very interesting for future research to consider the claw-free graphs for the maximum independent cut problem.

Other interesting directions include: (i) To study other cases of the networked anti-coordination game, where $\pi_A\neq \pi_B$, and neither $\pi_A\gg \pi_B$ nor $\pi_B\gg \pi_A$. (ii) To study the PoA and PoS for the networked anti-coordination game. (iii) To design approximate algorithms for various special cases of the maximum independent cut problem.

{\bf Acknowledgements.} The authors are grateful to two anonymous reviewers for pointing out two mistakes in the original proofs to Theorem 1 and Theorem 2, and their valuable suggestions on presentation of this paper.

\end{document}